\documentclass{book}
\usepackage[utf8]{inputenc}

\title{Machine and Deep Learning for Credit Scoring: A compliant approach}
\author{Abdollah RIDA}
\date{July $2$, 2019}

\usepackage{natbib}
\usepackage{graphicx}
\usepackage{amsmath,amsthm,amssymb}
\usepackage{algorithm}
\usepackage[noend]{algpseudocode}
\usepackage{float}

\usepackage[margin=1in]{geometry}

\usepackage[toc,page]{appendix}

\usepackage{mathtools}

\DeclarePairedDelimiter\abs{\lvert}{\rvert}%

\theoremstyle{definition}
\newtheorem{definition}{Definition}

\theoremstyle{plain}
\newtheorem{theorem}{Theorem}

\newtheorem{propriety}{propriety}

\makeatletter
\def\BState{\State\hskip-\ALG@thistlm}
\makeatother

\DeclareMathOperator{\argmin}{argmin}

\usepackage{lipsum}
\newcommand\abstractname{Abstract}  
\makeatletter
\if@titlepage
  \newenvironment{abstract}{%
      \titlepage
      \null\vfil
      \@beginparpenalty\@lowpenalty
      \begin{center}%
        \bfseries \abstractname
        \@endparpenalty\@M
      \end{center}}%
     {\par\vfil\null\endtitlepage}
\else
  \newenvironment{abstract}{%
      \if@twocolumn
        \section*{\abstractname}%
      \else
        \small
        \begin{center}%
          {\bfseries \abstractname\vspace{-.5em}\vspace{\z@}}%
        \end{center}%
        \quotation
      \fi}
      {\if@twocolumn\else\endquotation\fi}
\fi
\makeatother

\begin{document}

\maketitle

\newpage

\tableofcontents

\listoffigures
 
\listoftables

\begin{abstract}
Credit Scoring is one of the problems banks and financial institutions have to solve on a daily basis. If the state-of-the-art research in Machine and Deep Learning for finance has reached interesting results about Credit Scoring models, usage of such models in a heavily regulated context (\citep{FED1}, \citep{FED2}, \citep{OCC1}, \citep{OCC2}) such as the one in banks has never been done so far. 

Our work is thus a tentative to challenge the current regulatory status-quo and introduce new BASEL 2 and 3 compliant techniques, while still answering the Federal Reserve Bank and the European Central Bank requirements.

With the help of Gradient Boosting Machines (mainly XGBoost \citep{DBLP:journals/corr/ChenG16}) we challenge an actual model used by BANK A for scoring through the door Auto Loan applicants. We prove that the usage of such algorithms for Credit Scoring models drastically improves performance and default capture rate.

Furthermore, we leverage the power of Shapley Values \citep{DBLP:journals/corr/LundbergL17} to prove that these relatively simple models are not as black-box as the current regulatory system thinks they are, and we attempt to explain the model outputs and Credit Scores within the BANK A Model Design and Validation framework.
\end{abstract}

\chapter{Introduction}
During US subprime mortgage crisis and the European sovereign debt crisis, established financial institutions in the USA and Europe suffered huge losses. The crisis, mainly due to misuse of credit default swaps (CDS), raised concenrns about credit risk. Credit risk management is becoming an increasingly important factor and attracted significant attention from researchers and market participants. In order to effectively manage the credit risk exposures, optimize capital, answer to regulation and increase profits, financial institutions switched their focus to developing an accurate credit scoring model.

After the introduction of the commercial scorecard, many statistical methods have been used for credit risk assessment. Despite their wide application, these models cannot capture the complex financial relations specific to credit risk. Related studies \citep{saberi_granular_2013} have shown that machine learning techniques are superior to that of statistical techniques in dealing with credit scoring problems. Currently, simple models such as Logistic regression or simple decision trees are the most frequently used statistical models. Meanwhile, some shallow architectures such as support vector machines (SVMs) and multi-layer perceptron (MLPs) with a single hidden layer, have been applied to this problem \citep{bhatia_credit_2017} \citep{khandani_consumer_2010}.

Shallow architectures have been shown effective in solving many simple or well-constrained problems. However, these methods mainly focus on the outputs of classifiers at the abstract level, while neglecting the rich information hidden in the confidence degree. Their limited modeling and representational power can cause difficulties when dealing with more complicated real-world applications.

On the other hand Gradient Boosting, and specifically XGBoost \citep{DBLP:journals/corr/ChenG16}, has shown promising results on a multitude of real world problems. Beyond being the recommended algorithm for creating effective and reliable baselines for any machine learning problem, XGBoost is: blazingly fast; handles missing values without the use of imputation; captures non-linearities effectively. He is thus, among all other Gradient Boosting Algorithm candidates, the best choice given our computational constraints. Moreover, the python package comes natively equipped with the possibility of using Shapley Values \citep{DBLP:journals/corr/LundbergL17} to explain the model outputs and provide a more in-depth understanding of how the scoring process is done. This combination allows us to comply to the FED's \citep{FED1} \citep{FED2} and OCC's \citep{OCC1} \citep{OCC2} requirements from credit scoring models.

To our knowledge, this is the first comprehensive study of Gradient Boosting Models in corporate and retail/wholesale credit rating based on real bank data. Therefore, this memoir fills in such a literature gap by introducing XGBoost as the algorithm for credit rating to generate fast and accurate individual classification and scoring results. The goal is to provide a set of descriptive results and tests that lay a foundation for future theoretical and empirical work on XGBoost in credit scoring in auto loan markets, but also for corporate lending. In this memoir, we investigate the performances of different credit scoring models by conducting experiments on a collection of auto loan data. 

The remainder of the memoir is organized as follows. Chapter 2 describes the model framework and theory examined. Chapter 3 describes the experiments performed. Chapter 4 presents the empirical results from comparing our model to BANK A's. Chapter 5 focuses on explainability. Appendices are provided to introduce the imbalance problem and provide proofs for theorems we used.

\chapter{Model Framework and Theory}

In this section we describe the theory behind the tree based models, boosting, cross-validation and class rebalancing. We also provide proofs that guided our modelling decisions.

\section{Mathematical notions}

\subsection{Cross-validation, Class weights and overview of the algorithm}

Before delving into the mathematical intricacies of cross-validation and class reweighting, let us first correctly define the mathematical setting we are working in.

Let $(\Omega, \mathcal{F}, \mathbb{P})$ be a probability space. Assume that $(X,Y)$ is a couple of random variables defined on $(\Omega, \mathcal{F}, \mathbb{P})$ and taking values in $\chi \times \{-1,1\}$ where $\chi$ is a given state space (the bank, behavioral and account features in our case). Our model's aim is to define a function $h: \chi \longrightarrow \{-1,1\}$ called classifier such that $h(X)$ is the best prediction of $Y$ in a given context. For instance, the probability of misclassification of $h$ is:

$$L_{miss}(h) = \mathbb{P}\left ( Y \neq h(X) \right )$$

Note that $\mathbb{E}\left[X|Y\right]$ is a random variable measurable with respect to the $\sigma-$algebra $\sigma(X)$. Therefore there exists a function $\eta: \chi \longleftarrow \left[-1,1\right]$ so that $\mathbb{E}\left[X|Y\right] = \eta(X)$ almost surely. The following theorem is a well-known result:

\begin{theorem}

The Bayesian classifier $h_*$ defined for all $x \in \chi$ by:

\[
    h_*(x)= 
\begin{cases}
    1,& \text{if } \eta(x)\ge 0\\
    -1,              & \text{otherwise}
\end{cases}
\]

is such that:

$$h_* = \argmin_{h: \chi \longrightarrow \{-1,1\}} L_{miss}(h)$$

\end{theorem}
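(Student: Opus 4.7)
The plan is to reduce the global minimization of $L_{miss}$ over all classifiers to a pointwise minimization, exploiting the fact that a classifier's value at $x$ can be chosen independently for each $x$. First I would rewrite the misclassification probability using the tower property:
\[
L_{miss}(h) = \mathbb{E}\bigl[\mathbf{1}_{Y \neq h(X)}\bigr] = \mathbb{E}\bigl[\mathbb{E}[\mathbf{1}_{Y \neq h(X)} \mid X]\bigr] = \mathbb{E}\bigl[\mathbb{P}(Y \neq h(X) \mid X)\bigr].
\]
Since $h(X)$ is $\sigma(X)$-measurable, the inner conditional probability at $X=x$ is just $\mathbb{P}(Y \neq h(x) \mid X=x)$, so it suffices to minimize $x \mapsto \mathbb{P}(Y \neq h(x) \mid X=x)$ pointwise.

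Next I would translate everything in terms of $\eta$. Because $Y$ takes only the two values $\pm 1$, the identity $\mathbb{E}[Y \mid X] = \eta(X)$ (the statement in the excerpt has the conditioning written the wrong way; I would first note this and use the correct version $\eta(x) = \mathbb{E}[Y \mid X=x]$) combined with the normalization $\mathbb{P}(Y=1\mid X)+\mathbb{P}(Y=-1\mid X)=1$ yields
\[
\mathbb{P}(Y=1\mid X=x) = \tfrac{1+\eta(x)}{2}, \qquad \mathbb{P}(Y=-1\mid X=x) = \tfrac{1-\eta(x)}{2}.
\]
Consequently the pointwise misclassification risk equals $(1-\eta(x))/2$ if $h(x)=1$ and $(1+\eta(x))/2$ if $h(x)=-1$, so the minimum is attained by choosing $h(x)=1$ exactly when $(1-\eta(x))/2 \le (1+\eta(x))/2$, i.e.\ when $\eta(x)\ge 0$. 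This is precisely $h_*(x)$.

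Finally I would close the argument by comparing an arbitrary classifier $h$ to $h_*$: for every $x\in\chi$,
\[
\mathbb{P}(Y \neq h_*(x)\mid X=x) \le \mathbb{P}(Y \neq h(x)\mid X=x),
\]
and taking expectations with respect to the law of $X$ gives $L_{miss}(h_*) \le L_{miss}(h)$, which is the desired minimality. The only technical subtlety, and the step I would treat most carefully, is the reduction from minimizing a functional over the whole class of measurable classifiers to a pointwise optimization; once the conditional-expectation identity $\mathbb{E}[\mathbf{1}_{Y\neq h(X)}]=\mathbb{E}[\mathbb{E}[\mathbf{1}_{Y\neq h(X)}\mid X]]$ is invoked, the rest is an elementary comparison. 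Measurability of $h_*$ itself is immediate since $\eta$ is measurable and $h_*=\mathrm{sign}\circ\eta$ (with the convention $\mathrm{sign}(0)=1$).
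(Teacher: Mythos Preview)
Your proposal is correct and follows essentially the same route as the paper: condition on $X$ via the tower property, express the inner conditional risk in terms of $\eta(x)=\mathbb{E}[Y\mid X=x]$, and minimize pointwise to recover the sign rule. If anything, your version is tidier---the paper's manipulation of the indicator terms is somewhat garbled, whereas your explicit computation of $\mathbb{P}(Y=\pm 1\mid X=x)=(1\pm\eta(x))/2$ and the subsequent case comparison is cleaner and makes the pointwise reduction transparent.
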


\begin{proof}
  Let us consider the following minimization problem:
  $$\argmin_{h: \chi \longrightarrow \{-1,1\}} L_{miss}(h)$$
  The problem is equivalent to:
  $$\argmin_{h: \chi \longrightarrow \{-1,1\}} \mathbb{E}_X\left[\mathbb{E}_{Y|X}\left[\mathbf{1}_{Y_i \neq h(X)}\right]\right]$$
  Then:
  \begin{equation*}
    \begin{aligned}
    \mathbb{E}_X\left[\mathbb{E}_{Y|X}\left[\mathbb{1}_{Y_i \neq h(X)}\right]\right] &= \mathbb{E}_X\left[\mathbb{P}(Y=1|X) - \mathbb{P}(Y=-1|X)\right] \\
    & = \mathbb{E}_X\left[\mathbf{1}_{1 \neq h(X)}\mathbb{P}(Y=1|X) - \mathbf{1}_{-1 \neq h(X)} \mathbb{P}(Y=-1|X)\right] \\
    & - \mathbb{E}_X\left[-\mathbf{1}_{-1 \neq h(X)}\mathbb{P}(Y=1|X) + \mathbf{1}_{1 \neq h(X)} \mathbb{P}(Y=-1|X)\right]  \\
    & = \mathbb{E}_X\left[\mathbf{1}_{1 \neq h(X)}\mathbb{E}\left[Y|X\right]\right] - \mathbb{E}_X\left[\mathbf{1}_{-1 \neq h(X)}\mathbb{E}\left[Y|X\right]\right]
    \end{aligned}
  \end{equation*}
  Thus to minimize the above expression $h$ must verify:
  \[
    h(x)= 
  \begin{cases}
    1,& \text{if } \mathbb{E}\left[Y|X\right]\ge 0\\
    -1,              & \text{otherwise}
  \end{cases}
  \]
\end{proof}

However in practice the minimization of $L_{miss}$ holds on a specific on a specific set $H$ f classifiers (weak decision trees in our case) which may possibly not contain the Bayes classifier. Moreover, since in most cases the classification risk $L_{miss}$ cannot be computed nor minimized, it is instead estimated by the empirical classification risk defined as:

$$\hat L_{miss}^n(h)= \frac{1}{n} \sum_{i=1}^{n} \mathbf{1}_{Y_i \neq h(X_i)}$$

Where $(X_i,Y_i)$ are independent observations with the same distribution as $(X,Y)$. The classification problem then boils down to solving:

$$\hat h^n_H = \argmin_{h \in H} \hat L_{miss}^n(h)$$

Using Hoeffding's inequality, we can easily prove\footnote{A quick proof is provided in the appendix} that when $H = \{h_1, ..., _M\}$ then for all $\delta \ge 0$:

$$\mathbb{P} \left[L_{miss}\left ( h^n_H\right ) \leq \min_{1 \leq j \leq M} L_{miss} \left (h_j \right ) + \sqrt{\frac{2}{n}\log\left (\frac{2M}{\delta} \right )} \right] \geq 1 - \delta$$

Giving us a “confidence bound” for our predictor. In practice, this translates to:
\begin{itemize}
    \item The more data the better the prediction
    \item The bigger the predictor set , the bigger the error
    \item A higher confidence level  leads to a bigger error
\end{itemize}

\subsection{Cross-validation}

Given the mathematical context above, we can now develop the theory behind cross-validation. We have seen that a large dictionary of predictors $H$ and a large quantity of data gives us a better prediction, but in our case the bigger the $H$ the higher the risk of overfitting becomes since we are randomly generating trees. That is why we need cross-validation.
The goal of cross-validation is to assess the quality of a given machine learning method. It computes error estimates on training and validation sets to choose the most promising ones. One can see cross-validation as an estimate of the average risk of a Machine Learning method. It does not yield an error bound on the predictor obtained in practice. The idea is that once we have identified our best combination of parameters we test the performance of that set of parameters in a different context. That is k-fold cross-validation.
In k-fold cross-validation, the original sample is randomly partitioned into k equal sized subsamples. Of the k subsamples, a single subsample is retained as the validation data for testing the model, and the remaining $k-1$ subsamples are used as training data. The cross-validation process is then repeated k times, with each of the k subsamples used exactly once as the validation data. The k results can then be averaged to produce a single estimation.

Mathematically this can be modelled the following way: Let $\kappa : \{1,...,N\} \longrightarrow \{1,...,N\}$ be an indexing function that indicates the partition to which observation $i$ is allocated by the randomization. Denote by $\hat f^{-k}(x)$ the fitted function, computed with the $k^{th}$ part of the data removed. Then the cross-validation estimate of prediction error is:

$$CV(\hat f) = \frac{1}{N} \sum_{i = 1}^n L \left (y_i, \hat f^{-\kappa (i)}(x_i)\right )$$

Given a set of models $f(x, \alpha)$ indexed by a tuning parameter $\alpha$, denote by $f^{-k}(x, \alpha)$ the $\alpha^{th}$ model fit with the $k^{th}$ part of the data removed. Then for this set of models we define:

$$CV(\hat f, \alpha) = \frac{1}{N} \sum_{i = 1}^n L \left (y_i, \hat f^{-\kappa (i)}(x_i, \alpha)\right )$$

The function $CV(\hat f, \alpha)$ provides an estimate of the test error curve, and we find the tuning parameter $\hat \alpha$ that minimizes it. Our final chosen model is $f(x, \hat \alpha)$ which we will then fit to all the data. Plotting the evolution of the cross-validation error and the training error at the same time can also give us more insight about whether the model overfits or not.

\subsection{Loss Reweighting}

Class imbalance\footnote{A more detailed study of imbalance and how it can hurt your models is provided in appendix} is an issue that is common in credit portfolios: the good borrowers are much more present than the defaulting ones. This makes the classifiers too attracted to the majority class and as a result makes training and testing errors not the same. We can use class resampling or loss reweighting to solve this issue. Loss reweighting rewrites our loss function as:

$$C(Y) \, l(Y, h(X))$$

Where C puts more emphasis on some classes than others. This means that if our testing error target is:

$$\mathbb{E}_{\pi_t}\left[C_t(Y) \, l(Y, h(X))\right] = \sum_k \pi_t(k) \, C_t(k) \, \mathbb{E}\left[l(Y, h(X))|Y=k\right]$$

And our training error is:

$$\mathbb{E}_{\pi_{tr}}\left[C_{tr}(Y) \, l(Y, h(X))\right] = \sum_k \pi_{tr}(k) \, C_{tr}(k) \, \mathbb{E}\left[l(Y, h(X))|Y=k\right]$$

Where $\pi$ is the class probability. To make the errors the same we can combine resampling and loss weighting by choosing:

$$C_{tr}(k) = C_t(k) \, \frac{\pi_t(k)}{\pi_{tr}(k)}$$

It is worth noting that this induces a change of probability measure as all expectations are computed under the new reweighted measure. This means that the model does not predict real probabilities (as in probabilities under the original probability measure) anymore.

\section{Prediction Model}

\subsection{Decision Trees}

The tree ensemble model consists of a set of classification and regression trees (CART). Here’s a simple example of a CART that classifies whether someone will like a hypothetical computer game X in Figure 2.1.

\begin{figure}[htbp]
\centering
\includegraphics[width=8cm]{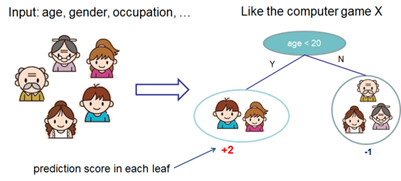}
\caption{A CART Decision Tree from \citep{DBLP:journals/corr/ChenG16}}
\end{figure}

In these CARTs, a real score is associated with each of the leaves, which gives us richer interpretations that go beyond classification. This also allows for a principled, unified approach to optimization.
Usually, a single tree is not strong enough to be used in practice. What is actually used is the ensemble model is a forest, which sums the prediction of multiple trees together. 

\begin{figure}[htbp]
\centering
\includegraphics[width=8cm]{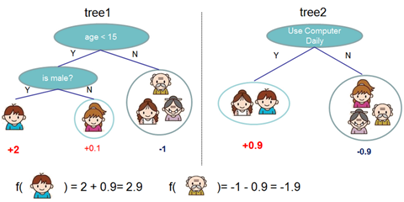}
\caption{Two XGBoost trees with score as output (instead of prediction) \citep{DBLP:journals/corr/ChenG16}}
\end{figure}

\subsubsection{Boosting}

The idea behind boosting is to learn a sequence of weak predictors trained on a weighted dataset with the weights depending on the loss so far. 

\begin{figure}[h]
\centering
\includegraphics[width=8cm]{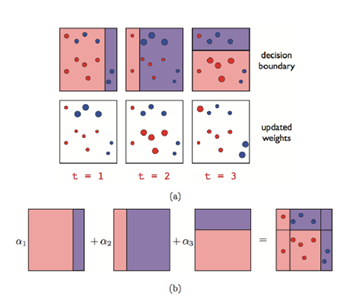}
\caption{Boosting process for a decision tree}
\end{figure}

A weak learner is usually a simple predictor that is: 1) easy to learn, 2) only needs to be slightly better than a constant predictor. Some examples of weak learners are: Decision trees with few splits, Stumps (Decisions trees with one split) and Generalized Linear Regressions with few variables.
The boosting process is therefore just a sequential linear combination of weak learners that attempts to minimize a loss.

Formally, a CART splits the space of all joint predictor variable values into disjoint partitions $R_j$ where $j$ represents the terminal node of the tree. A constant $\gamma_j$ is assigned to each partition and the predictive rule is:

$$x \in R_j \Longrightarrow f(x) = \gamma_j$$

Thus a tree can be formally expressed as:

$$T(x; \Theta) = \sum_{j=1}^J \gamma_j \, \mathbf{1}_{x \in R_j}$$

With parameters $\Theta = {R_j, \gamma_j}_{1 \leq j \leq J}$. J is usually treated as a meta-parameter. The parameters are found by minimizing the empirical risk:

$$\hat \Theta = \argmin_{\Theta} \sum_{j=1}^J \sum_{x_i \in R_j} L(y_i, \gamma_j)$$

As specified in \citep{hastie01statisticallearning}, this combinatorial optimization problem is complicated, and we usually settle for approximate suboptimal solutions. The problem can be divided in two parts: Finding $\gamma_j$ given $R_j$ and findng $R_j$.

We previously described a general strategy to find the best classifier and \citep{hastie01statisticallearning} describes one for classification trees. In our case the Gini (or AUROC) replaces the misclassification loss in the growing of the tree. The boosted tree model is thus the sum of such trees:

$$f_M(x) = \sum_{m=1}^M T(x; \Theta_m)$$

Induced in a forward stagewise manner\footnote{The forward stagewise algorithm can be found in the appendix}. At each step in the forward stagewise procedure one must solve:

$$\hat \Theta_m = \argmin_{\Theta_m} \sum_{i=1}^N L\left (y_i, f_{m-1}(x_i) + T(x_i; \Theta_m)\right )$$

\subsection{Gradient Boosting}

The following presents the generic gradient tree-boosting algorithm for regression. Specific algorithms are obtained by inserting different loss criteria $L(y, f(x))$. The first line of the algorithm initializes to the optimal constant model, which is just a single terminal node tree. The components of the negative gradient computed at line 4 are referred to as generalized or pseudo residuals, $r$.

\begin{algorithm}
\caption{Gradient Tree Boosting Algorithm}\label{GTB}
\begin{algorithmic}[1]
\Procedure{Initialize}{}
\State $f_0(x) \gets \argmin_{\gamma} \sum_{i = 1}^N L(y_i,\gamma)$
\BState \emph{For $m = 1$ to M}:
\State For $i = 1,...,N$ compute $r_{i,m} = - \left[\frac{\partial L(y_i, f(x_i))}{\partial f(x_i)}\right]_{f = f_{m-1}}$.
\State Fit a regression tree to targets $r_{i,m}$ giving terminal partitions $R_{j,m}$, $j=1,...,J_m$
\State For $j=1,...,J_m$ compute $\gamma_{j,m} = \argmin_{\gamma} \sum_{x_i \in R_{j,m}} L(y_i, f_{m-1}(x_i) + \gamma)$
\State Update $f_m(x) = f_{m-1}(x) + \sum_{j=1}^{J_m} \gamma_{jm} \mathbf{1}_{x \in R_{jm}}$
\EndProcedure
\Procedure{Output}{}
\State Return $\hat f(x) = f_M(x)$
\EndProcedure
\end{algorithmic}
\end{algorithm}

The algorithm for classification is similar. Lines 3–7 are repeated K times at each iteration m, once for each class. The result at line 9 is K different (coupled) tree expansions $f_{k,M}(x), \, k = 1,...,K$. These produce probabilities or do classification \citep{hastie01statisticallearning}.

\subsection{XGBoost}

XGBoost \citep{DBLP:journals/corr/ChenG16} pushes Gradient Boosting to the limit. Gradient Boosting carries the principle of Gradient Descent and Boosting to supervised learning. Gradient Boosted Models (GBMs) are trees built sequentially; XGBoost is parallelized and is thus blazingly faster.

\begin{itemize}
\item Each new model uses Gradient Descent optimization to update/ make corrections to the weights to be learned by the model to reach a local minimum of the cost function.
\item The vector of weights assigned to each model is derived from the weights optimized by Gradient Descent to minimize the cost function. The result of Gradient Descent is the same function of the model as the beginning, just with better parameters.
\item Gradient Boosting adds a new function to the existing function in each step to predict the output. The result of Gradient Boosting is an altogether different function from the beginning, because the result is the addition of multiple functions.
\end{itemize}

XGBoost’s objective function is as follows: 

$$obj = \sum_{i=1}^n l(y_i, \hat y_i^{(t)}) + \sum_{i=1}^t \Omega(f_i)$$

Where $\Omega$ is the regularization term that controls model complexity and prevents overfitting. In the above equation $f_i$ is the tree number $i$ and $y_i$ (resp. $\hat y_i$) the true class (resp. the predicted class) of the $i^{th}$ data point. $l$ is the loss function.

\section{Model Training, Calibration and validation}

The performance measures used for training and cross-validation are \textbf{ROCAUC} and \textbf{Log-Loss}, while \textbf{$\mathbf{F_{\mathbf{\beta}}}$-score} was used for hyper-parameter tuning. Log-loss (also known as Binary Cross-Entropy) measures absolute probabilistic difference between our predicted classes. It drives the model to make predictions that are better at separating the two classes. It is given by the following formula:

$$-{(y\log(p) + (1 - y)\log(1 - p))}$$

While $F_{\beta}$-score measures the effectiveness of information retrieval for recall and  times precision (i.e. we attach $\beta^2$ times more importance to precision than recall). It is given by:

$${\displaystyle F_{\beta }=(1+\beta ^{2})\cdot {\frac {\mathrm {precision} \cdot \mathrm {recall} }{\beta ^{2}\cdot \mathrm {precision} +\mathrm {recall} }}}$$

As a reminder, precision and recall are given by:

$$precision = \frac{true \, positives}{true \, positives + false \, positives}$$

$$recall = \frac{true \, positives}{true \, positives + false \, negatives}$$

\chapter{Model Specifications and Estimation}

Every Machine Learning project can be divided into 3 major steps: data preprocessing and feature engineering, model building and training and finally model calibration, fine tuning and validation.
Figure 3.1 shows an example of the data preparation/preprocessing pipeline that is usually used for Machine Learning projects

\begin{figure}[h]
\centering
\includegraphics[width=16cm]{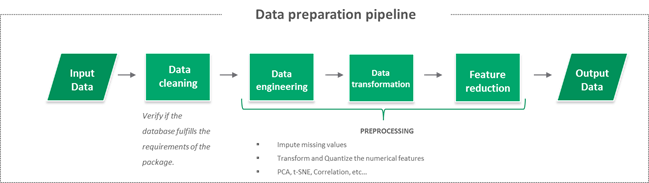}
\caption{Data Preparation Pipeline}
\end{figure}

\section{Model Specifications}
\subsection{Target Variable}

The model target was defined as ‘Ever 60 day past due or worse, including charge-offs and repossessions, within the first 18 months of origination.’ Development records meeting this definition were assigned a target value of 0 (Bad). No indeterminate performance was used. All development applicants not assigned a target value of 0, received a target value of 1 (Good). For BANK A booked records, the performance flag provided on the input field was used. 

\subsection{Primary Modeling Technique}

The preprocessing used for the dataset is light. There are three main axes: missing values, categorical feature encoding and variable selection. All variables go through a process of encoding using Weight of Evidence. Measures such as KS, PR, and ROC are compared between the training and test development samples, as well as the out-of-time validation population, to ensure the model did not over-fit the training sample. Close values between samples indicated the model validates. Score distributions are also exampled to ensure a smooth rank ordering of risk and a probabilistic tendency to separate the two classes.
For the BANK A auto prime model development, as a preliminary step, a Gradient Boosted Model (GBM) with a depth of six was initially run to develop a quick baseline expectation of best-case model performance (i.e. likely highest KS attainable) prior to constraining for fair credit reporting. Several model versions were built with different parameters and hyper-parameters. For further detail on these iterations, please see section Key Modeling Decisions and Alternative Specifications.

\subsection{Missing Values}

No missing values imputation was done. XGBoost handles missing values the following way: at each node of the decision tree, IF the tested value is missing THEN redirect towards the default value (which is specified at each node).
We tried building models with missing value imputation (replacing with a default value or simply removing observations with missing values) but we noticed that the best performing model is the one with no missing imputation.

\subsection{Encoding Categorical Features}

We used Weight of Evidence (WOE) encoding for categorical features. WOE is given by:

$$WOE_c = \log \left[ \frac{\frac{Goods_c}{Goods_{total}}}{\frac{Bads_c}{Bads_{total}}}\right] \times 100$$

Bad ratios will have a WOE less than zero, while good ratios will be greater than zero. Those intervals with a WOE near zero will be neutral. We used the scikit-learn contrib library categorical-encoders that include WOE encoding. Correspondence between categorical values and their encoding is accessible by calling the encoder.transform method and examine the relevant features before and after encoding.

\subsection{Variable Selection}

We removed all the features that were part of Credit Bureau B’s model output, as well as all external and bureau scores. We also removed all features that we believed would violate the Fair Credit Reporting Act. 
We tried using several variable selection methods, but in the end we opted for XGBoost’s feature importance and the underlying Shapley score \citep{DBLP:journals/corr/LundbergL17} of each feature to do feature reduction. The model thus has no variable selection/feature reduction before training.

\section{Model Development Tools}

Model developed in Python 3.7. Datasets provided from BANK A shared in .csv format.

\subsection{Main Dependencies}

The main requirements are:
numpy, pandas scikit-learn, xgboost, category-encoders, joblib, tqdm

\subsection{Dependencies for Plots and Explainability}

These dependencies are optional (but are currently required by the code) to plot outputs and explain the model:
matplotlib, seaborn, scikitplot, shap

\section{Key Modeling Decision and Alternative Specifications}

The first model that was built was intended to actually predict the real probability of default and thus could be used as a decision-making model. Unfortunately, its class separation power was extremely low. Figure 3.3 shows the results of this particular model.
This calibration was done using the max\_delta\_step parameter. In \citep{DBLP:journals/corr/ChenG16}, the definition of the weight updating is the following:

$$w_j^* = -\frac{\sum_{i \in I_j} g_i}{\sum_{i \in I_j} h_i + \lambda}$$

\begin{figure}[h]
\centering
\includegraphics[width=16cm]{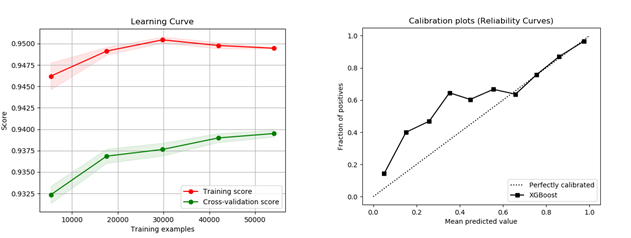}
\caption{Learning Curve and Reliability curve for the in-time dataset for the calibrated model}
\end{figure}

Where $h_i = \partial^2_{\hat y^{(t-1)}} l(y_i, \hat y^{(t-1)})$ the hessian (the reader can refer to \citep{DBLP:journals/corr/ChenG16} for original notations).
In the cases where our training dataset contains high imbalance, it is clear that the value of  will be extremely small for the minority class. Making the weight updates even more biased towards the majority class.
max\_delta\_step is a parameter that aims to bound the absolute value of the inverse of the  hessian matrix (i.e. our weight updates). Typical values are between 1 and 10. By introducing this parameter, we avoid the probability measure change induced by the class re-weighting (even if we can still use it) and are thus computing real world probabilities.

We plot the model’s performance during training and cross-validation in Figure 3.2 to prove that it does not over fit but also that it behaves similarly. We also plot reliability curves for this model that are useful for determining whether or not we can interpret the predicted probabilities directly as a confidence level. 

\begin{figure}[h]
\centering
\includegraphics[width=10cm]{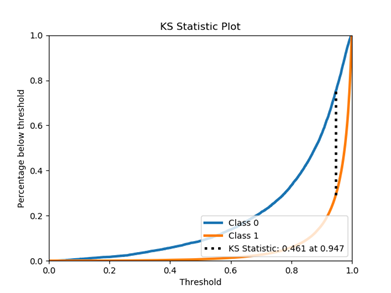}
\caption{KS Statistic plot for the out-of-time dataset for the calibrated model}
\end{figure}

As we can see, the model is extremely consistent and stable during training and cross-validation (that are done at the same time). It is also very well calibrated, especially for higher probabilities. Unfortunately the model’s separation power is really low.

In Figure 3.4 the confusion matrix poor performance is due to it being plotted for a 50\% threshold. The distribution plot shows once again that the model performs poor class separation.

\begin{figure}[h]
\centering
\includegraphics[width=16cm]{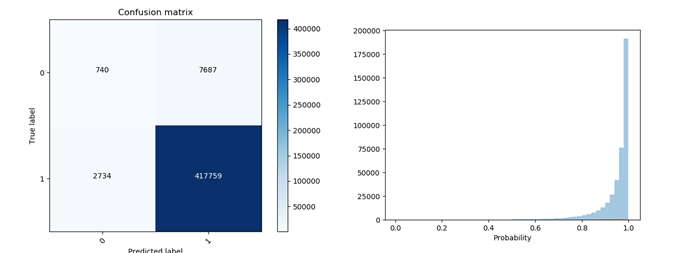}
\caption{Confusion Matrix and Score Distribution for the out-of-time dataset for the calibrated model}
\end{figure}

ROC and PR curves in Figure 3.5 show the model’s performance and prove that it’s hard to correctly separate the two classes without additional features. 

\begin{figure}[h]
\centering
\includegraphics[width=16cm]{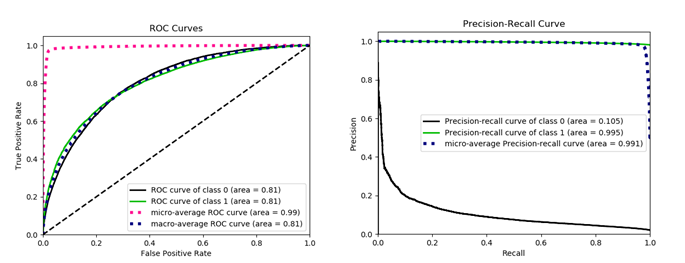}
\caption{ROC and PR Curves for the out-of-time dataset for the calibrated model}
\end{figure}

After the previous experiments, we decided to change the model’s parameters to more 	conservative ones to force class separation. After discussing with business as well as model 	owners we decided to output raw score (log-odds) as well as try the original thirteen variables used in the Credit Bureau B model. While this cannot assess the algorithm’s added value, since constraining 	the available features to the tree building algorithm will reduce the available tree space to one 	smaller that might not contain trees that can uncover the non-linearities present, it can show the power of the optimization algorithm that still manages to improve the results of the Credit Bureau B algorithm.

\begin{figure}[!h]
\centering
\includegraphics[width=8cm]{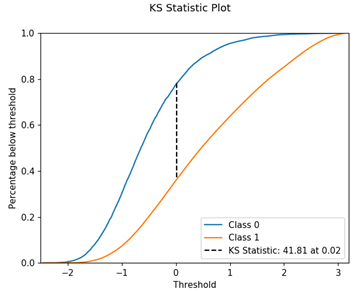}
\caption{KS Statistic Plot for the out-of-time dataset for the model using the original 13 variables}
\end{figure}

The distribution plot shows once again that the model performs poor class separation, yet that still outperforms that outputted by the Credit Bureau B model.	

\begin{figure}[!h]
\centering
\includegraphics[width=14cm]{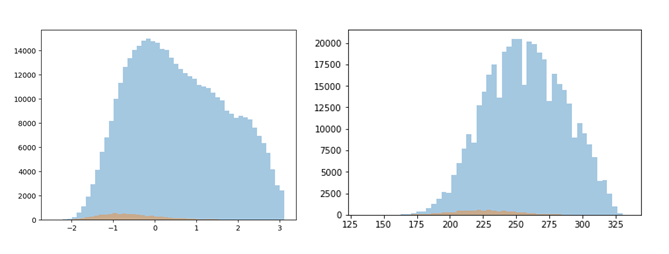}
\caption{Score distribution for our model using the original 13 variables (Left) VS BANK A's (Right) on the in-time dataset}
\end{figure}
	
\begin{figure}[!h]
\centering
\includegraphics[width=14cm]{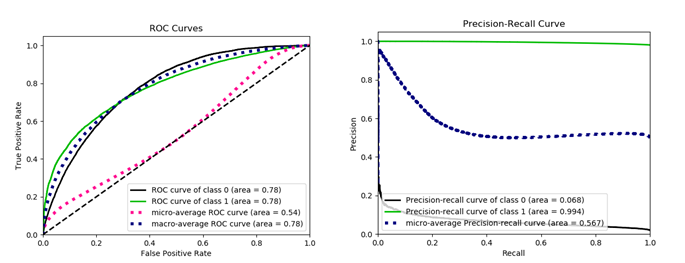}
\caption{ROC and PR on the out-of-time dataset for the model using the original 13 variables}
\end{figure}

ROC and PR curves show the model’s performance and prove that it’s hard to correctly separate the two classes without additional features. 

\pagebreak[4]

\section{Final Model Form and Specifications}

The final model was obtained by training the algorithm on the entire in-sample dataset and features, then appending the most important ones to the original thirteen features used in the Credit Bureau B model. The final model has the following parameters and hyper parameters. We provide an explanation below for the most meaningful and impactful ones:

\begin{figure}[ht]
\centering
\includegraphics[width=14cm]{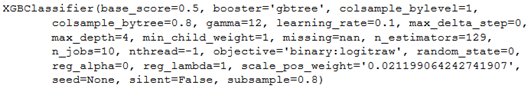}
\caption{Final model specifications}
\end{figure}

\begin{itemize}
    \item alpha: L1 regularization term on weights. Increasing this parameter will make model more conservative.
    \item lambda: L2 regularization term on weights. Increasing this value will make model more conservative.
    \item gamma: minimum score change to split a node. Is useful to help avoid overfitting. The larger gamma is, the more conservative the algorithm will be.
    \item max\_depth: maximum depth of a tree
    \item learning\_rate: Step size shrinkage used in update to prevents overfitting. After each boosting step, we can directly get the weights of new features, and eta shrinks the feature weights to make the boosting process more conservative.
    \item scale\_pos\_weight: Control the balance of positive and negative weights, useful for imbalanced classes. A typical value to consider: sum(negative instances) / sum(positive instances).
    \item max\_delta\_step: Maximum delta step we allow each leaf output to be. If the value is set to 0, it means there is no constraint. If it is set to a positive value, it can help making the update step more conservative. Usually this parameter is not needed, but it might help in logistic regression when class is extremely imbalanced: In extreme cases where the Hessian is nearly 0 (which is the case with imbalance) the weights of the majority class because infinite. This parameter helps by introducing an absolute regularization capping the weight.
    \item objective: binary:logistic: logistic regression for binary classification, output probability.
    \item min\_child\_weight: Minimum sum of instance weight (hessian) needed in a child. If the tree partition step results in a leaf node with the sum of instance weight less than min\_child\_weight, then the building process will give up further partitioning. The larger min\_child\_weight is, the more conservative the algorithm will be.
    \item colsample\_bytree: is the subsample ratio of columns when constructing each tree. Subsampling occurs once for every tree constructed.
    \item colsample\_bylevel: is the subsample ratio of columns for each level. Subsampling occurs once for every new depth level reached in a tree. Columns are subsampled from the set of columns chosen for the current tree.
    \item base\_score [default=0.5]: The initial prediction score of all instances, global bias. For sufficient number of iterations, changing this value will not have too much effect.
    \item subsample [default=1]: Subsample ratio of the training instances. Setting it to 0.5 means that XGBoost would randomly sample half of the training data prior to growing trees. and this will prevent overfitting. Subsampling will occur once in every boosting iteration.
\end{itemize}

\chapter{Model Performance and Testing}
\section{Overall Results}

We tested the methodology on a banking model for the autoloan portfolio. We used the Kolmogorov-Smirnov (KS) statistic as well as confusion matrices, Area under Receiver Operator and Precision-Recall Curves (AUROC) as metrics to measure performance. 

\begin{table}[h]
\begin{center}
\begin{tabular}{|l|c|c|c|c|c|c|}
\hline
Model     & KS in-time & KS OOT & AUROC in-time & AUROC OOT & PR in-time & PR OOT \\ \hline
Our Model & 47.8       & 44.91          & 0.81          & 0.80              & 0.093      & 0.093          \\ \hline
BANK A    & 41.89      & 41.31          & 0.77          & 0.77              & \_\_       & 0.06           \\ \hline
\end{tabular}
\caption {Comparison summary between our model and BANK A’s model} \label{tab:Comparison} 
\end{center}
\end{table}    

While the model has a relatively good early pick up, it quickly lags behind when it comes to correctly detecting default.
Finally, we plot the model’s performance during training and cross-validation to prove that it does not over fit but also that it behaves similarly. We also plot reliability curves for this model that are useful for determining whether or not we can interpret the predicted probabilities directly as a confidence level. 

\begin{figure}[ht]
\centering
\includegraphics[width=16cm]{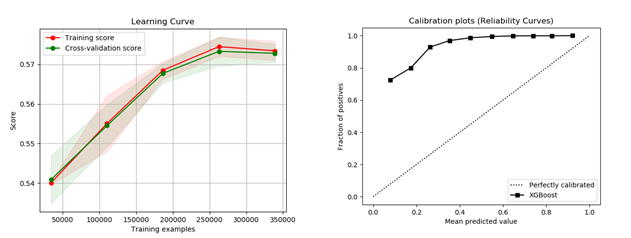}
\caption{Learning and Reliability Curves for the in-time dataset for the final model}
\end{figure}

As we can see, the model is extremely consistent and stable during training and cross-validation (that are done at the same time), it however needs calibration as the class imbalance (and the subsequent rebalancing) tends to make it difficult for the model to have reasonable confidence levels.

\section{Testing on Out-of-Time Data}

In addition to the out-of-sample plots, we also plot certain metrics for the original Credit Bureau B score. The model is also better than the original model built by Credit Bureau B at separating the two classes. This is thanks to using the logarithmic loss as an objective: The model aims to probabilistically separate defaults from good borrowers when scoring them, thus lowering the good rates in the high default bins.

\begin{figure}[h!]
\centering
\includegraphics[width=16cm]{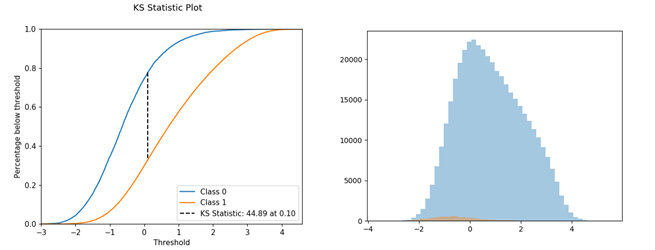}
\caption{KS statistic and score distribution for the final model on the out-of-time dataset for the final model}
\end{figure}

ROC and PR curves show the model’s performance and prove that it’s hard to correctly separate the two classes without additional features. 

\begin{figure}[h!]
\centering
\includegraphics[width=16cm]{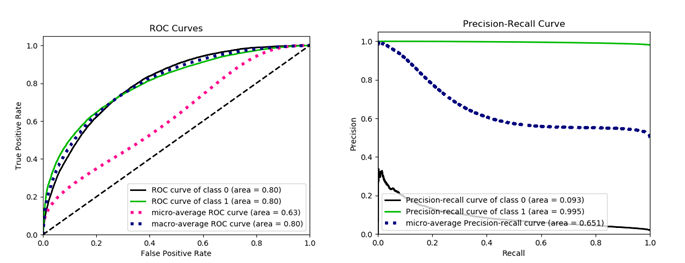}
\caption{ROC and PR on the out-of-time dataset for the final model}
\end{figure}

\chapter{Model Outputs, Reports and Uses}

\section{Model Outputs Overview}

This model is static but does support recalibration. Recalibration can be done by either retraining the model on the old data and the new data, the new data only, or just loading the current model then calling model.fit.

\section{Model Outputs}

This section includes the theory behind Shapley scores that can help understanding the model outputs. Possible uses are in providing consumers with a clear, concise, empirically-derived explanation as to why they were declined should the score be used for applicant decisioning.

\subsection{Explanation Models and Additive Feature Attribution Methods}

The ability to understand models and correctly interpret their predictions is crucial. It provides insight into how a model may be improved, supports understanding of the process being modelled and most importantly can help us answer regulatory and transparency requirements. While a wide variety of methods have been proposed to address the growing model complexity and the lack of their transparency, Shapley Explainers \citep{DBLP:journals/corr/LundbergL17} remain the clearest of all. 
Shapley Explainers \citep{DBLP:journals/corr/LundbergL17} introduce the idea of explaining a model through another model, called an explanation model. These models are defined as any interpretable approximation of the initial model.
Let $f$ be the original prediction model to be explained and $g$ the explanation model. Here we focus on local methods designed to explain a prediction $f(x)$ based on a single input $x$. Explanation models often use simplified inputs $x'$ that map the original inputs through a mapping function $x=h_x(x')$. Local methods try to ensure $g(z') \approx f(h_x(x'))$ whenever $z' \approx x'$. \citep{DBLP:journals/corr/LundbergL17} defines additive feature attribution methods as:

\begin{definition}{Additive Feature Attribution Methods}

Additive feature attribution methods have an explanation model that is a linear function of binary variables:

$$g(z') = \phi_0 + \sum_{i=1}^M \phi_i z'_i$$

Where $z' \in \{0,1\}^M$, $M$ is the number of simplified features and $\phi_i \in \mathbb{R}$

\end{definition}

Methods with explanation models matching Definition 1 attribute an effect $\phi_i$ to each feature, and summing the effects of all feature attributions approximates the output $f(x)$ of the original model.
There are three desirable properties we’d like our explanation model to have. The first one is local accuracy: when approximating the original model $f$ for a specific input $x$, local accuracy requires the explanation model to at least match the output of $f$ for the simplified input $x'$.

\begin{propriety}{Local Accuracy}

$$f(x) = g(x') = \phi_0 + \sum_{i=1}^M \phi_i x'_i$$

The explanation model $g(x')$ matches the original model $f(x)$ when $x = h(x')$, where $\phi_0 = f(h_x(0))$ represents the model output with all simplified inputs missing.

\end{propriety}

The second propriety is missingness. If the simplified inputs represent feature presence, then missingness requires features missing in the original input to have no impact.

\begin{propriety}{Missingness}

$$x'_i = 0 \Longrightarrow \phi_i = 0$$

Missingness constrains features where $x'_i = 0$ to have no attributed impact.

\end{propriety}

The third property is consistency. Consistency states that if a model changes so that some simplified input’s contribution increases or stays the same regardless of other inputs, that input’s attribution should not decrease.

\begin{propriety}{Consistency}

Let $f_x(z') = f(h_x(z'))$ and $z'/i$ denote setting $z'_i = 0$. For any two models $f$ and $f'$, if:

$$f'_x(z') - f'_x(z'/i) \geq f_x(z') - f_x(z'/i)$$

for all inputs $z' \in \{0,1\}^M$, then:

$$\phi_i(f',x) \geq \phi_i(f,x)$$

\end{propriety}

\subsection{Shapley Values}

A surprising attribute of the class of additive feature attribution methods is the presence of a single unique solution in this class with the three desirable properties above.

\begin{theorem}

Only one possible explanation model $g$ follows Definition 1 and satisfies properties 1,2 and 3:

$$\phi_i(f, x) = \sum_{z' \subseteq x'} \frac{|z'|! \, (M - |z'|- 1)!}{M!} \, (f_x(z') - f_x(z'/i))$$

where $|z'|$ is the number of non-zero entries in $z'$, and $z' \subseteq x'$ represents all vectors $z'$ where the non-zero entries are a subset of the non-zero entries in $x'$.

\end{theorem}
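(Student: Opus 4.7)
The proof splits naturally into existence and uniqueness. For existence I would directly check the three properties on the stated formula. Local accuracy follows by rearranging $\sum_i \phi_i(f,x)$ as a telescoping sum: grouping the subsets $z' \subseteq x'$ by cardinality $k=|z'|$, the combinatorial weight $|z'|!(M-|z'|-1)!/M!$ is exactly $1/(M\binom{M-1}{k})$, and each marginal contribution $f_x(z') - f_x(z'/i)$ appears in the double sum $\sum_i \phi_i$ with the correct multiplicity for the sum to collapse to $f_x(x') - f_x(0)$, i.e.\ to $f(x) - \phi_0$. Missingness is immediate: if $x'_i = 0$ then every $z' \subseteq x'$ already has $z'_i = 0$, so $z'/i = z'$ and every summand vanishes. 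Consistency is straightforward because every coefficient in the sum is nonnegative and $f$ enters only through the marginal contributions $f_x(z') - f_x(z'/i)$, and nonnegative linear combinations preserve pointwise inequalities.

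The substantive part is uniqueness, which I would prove by reducing to the classical characterization of Shapley values due to Young (1985). The pivotal observation is that applying the consistency axiom once to $(f,f')$ and once to $(f',f)$ yields $\phi_i(f,x) = \phi_i(f',x)$ whenever $f$ and $f'$ share the same family of marginal contributions $\{f_x(z') - f_x(z'/i) : z' \subseteq x'\}$. This has two corollaries: $\phi_i$ is a function of the marginal contributions only (Young's strong monotonicity), and two features with identical marginal-contribution profiles receive identical attribution (a symmetry axiom for free). Local accuracy plays the role of efficiency, with $\phi_0 = f(h_x(0))$ pinned down by missingness applied to the all-zero input. At this point Young's theorem applies to the set function $v(z') := f_x(z')$ viewed as a characteristic function on the $M$ simplified features, and the unique value satisfying efficiency, symmetry, and strong monotonicity is precisely the Shapley formula in the statement.

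The main obstacle is the reduction step rather than any combinatorial identity. Specifically, the three SHAP axioms are weaker than Shapley's original package of efficiency, symmetry, and additivity, so additivity of $\phi$ in $f$ cannot be invoked directly; what does the real work is Young's observation that strong monotonicity (combined with symmetry and efficiency) suffices. If one prefers a self-contained derivation in place of citing Young, the alternative is to expand $v = \sum_{S \subseteq x'} c_S u_S$ in the basis of unanimity games $u_S(z') = \mathbf{1}_{S \subseteq z'}$ and induct on the number of nonzero coefficients, using the symmetry and marginal-invariance consequences of consistency at each step to pin down $\phi_i = c_S/|S|$ on each $u_S$; the delicate bookkeeping is checking that peeling off a unanimity game preserves the marginal contributions of every feature $i$ outside its support, which is what legitimises the inductive step. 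Once the uniqueness reduction is in place, a standard counting argument over orderings in which feature $i$ joins a coalition of size $k$ recovers the weights $k!(M-k-1)!/M!$ as the only admissible choice.
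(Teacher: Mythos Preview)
Your proposal is correct and follows the same route as the paper, which does not give a self-contained proof but simply appeals to Young (1985): it records that Properties~1 and~3 play the roles of Young's efficiency and strong-monotonicity axioms, that a further axiom in Young's setting is redundant here, and that Property~2 is what is needed to adapt the cooperative-game result to the additive feature-attribution framework. Your write-up supplies considerably more detail than the paper itself (the explicit existence check, the extraction of symmetry and marginal-invariance from two-sided consistency, and the optional unanimity-game induction), but the core strategy --- reduce uniqueness to Young's characterization of the Shapley value --- is identical.
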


Theorem 2 follows from combined cooperative game theory results, where the values $\phi_i$ are known as Shapley values. Young (1985) \citep{Young1985} demonstrated that Shapley values are the only set of values that satisfy three axioms similar to Property 1, Property 3 and a final property that the authors have shown to be redundant in this setting. Property 2 is required to adapt the Shapley proofs to the class of additive distribution methods.

\subsection{Explainability}

SHAP (Shapley Additive exPlanations) is a unified approach to explain the output of any machine/deep learning model. SHAP connects the previous game theory and local additive explainers in an easy to use Python library that is well integrated with most algorithms in scikit-learn’s API, as well as XGBoost, LightGBM and CatBoost. 
Thanks to the shap package we can plot a summary graph that gives a simplified view of how certain feature values impact the output score of our model in terms of Shapley values.
The application of this method starts with the generation of a table of shap values for each variable in the model. We then explore the variables shap scores and understand how they interact to output an observations final score. Figure 5.1 shows the summary plot. It helps us get an overview of which features are most important for our model for every feature and for every sample. The plot sorts features by the sum of SHAP value magnitude over all samples, and uses SHAP values to show the distribution of the impacts each feature has on the model output (red high, blue low).

\begin{figure}[h!]
\centering
\includegraphics[width=8cm]{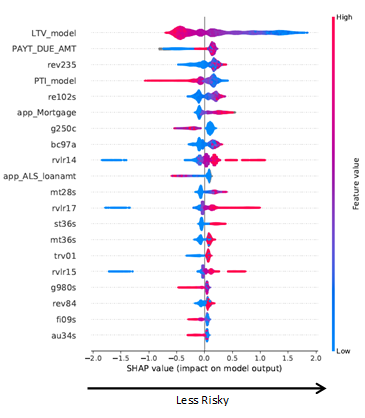}
\caption{Shapley Values summary for the final model}
\end{figure}

Dependence plots are plots that let us deep dive into a specific variable and explore how it interacts with other features: To understand how a single feature effects the output of the model we can plot the SHAP value of that feature VS. the value of the feature on all the examples in a dataset. Since SHAP values represent a feature’s responsibility for a change in the model output, Figure 5.2 below represents the change in the score as ContractState (not used in the final model) changes. Vertical dispersion at a single value of ContractState represents interaction effects with other features. 

\begin{figure}[ht]
\centering
\includegraphics[width=8cm]{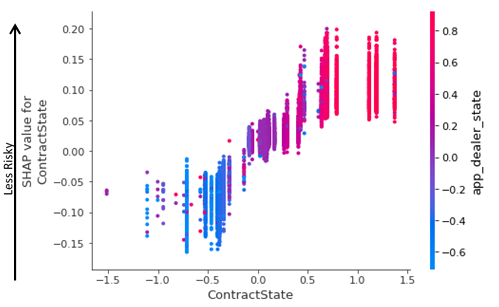}
\caption{Shapley Dependence Plot between the Contract State and the Dealer State}
\end{figure}

For a single observation, we can plot a force plot that shows how each contributing feature pushes the model output from the base value to the model output. Below in an example for a high score feature:

\begin{figure}[ht]
\centering
\includegraphics[width=12cm]{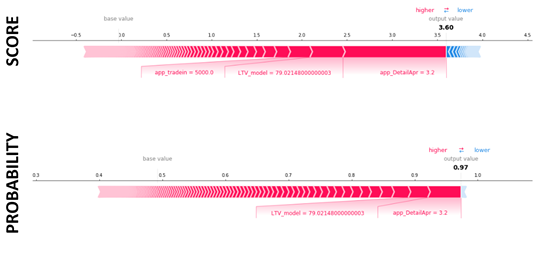}
\caption{Shapley Force plot for a high score observation}
\end{figure}

\begin{figure}[ht]
\centering
\includegraphics[width=12cm]{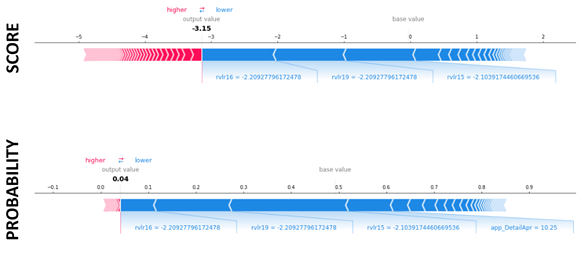}
\caption{Shapley Force plot for a low score observation}
\end{figure}

We can clearly see that a low APR (not used in the final model) and a low LTV are the main factors pushing the score up. The absence of negative impact features and the interaction effects are also a reason for this high score.
Below is an example of a low score observation. We can see that the main drivers in this case are the different bankcard Revolving/Transactor/Inactive patterns. Once again this is coherent with the SHAP summary plot.

\section{Model Reports}

\subsection{Swap Set Analysis}

The above tables show that our challenger model is able to detect more defaults without trading off a lot of good observations. 

\begin{table}[!h]
\begin{center}
\begin{minipage}{1.5in}
\begin{tabular}{lcc}
 & Our & BANK A \\
\hline
Worst 20\% & 6.00\% & 5.52\% \\
\hline
Total & 85967 & 85967 \\
\hline
Bads & 5157 & 4748 \\
\hline
Goods & 80810 & 81219 \\
 \end{tabular}
\end{minipage}
\quad \quad \quad \quad \quad \quad \quad
\begin{minipage}{1.5in}
 \begin{tabular}{lcc}
 & Our & BANK A \\
\hline
Worst 10\% & 8.26\% & 7.02\% \\
\hline
Total & 43637 & 43637 \\
\hline
Bads & 3561 & 3065 \\
\hline
Goods & 40121 & 40572 \\
 \end{tabular}    
\end{minipage}
\caption{Swap set analysis between our model and BANK A's}
\end{center}
\end{table}

\subsection{Process Flow Diagram}

Below is the process flow diagram describing how the model was built:

\begin{figure}[!h]
\centering
\includegraphics[width=14cm]{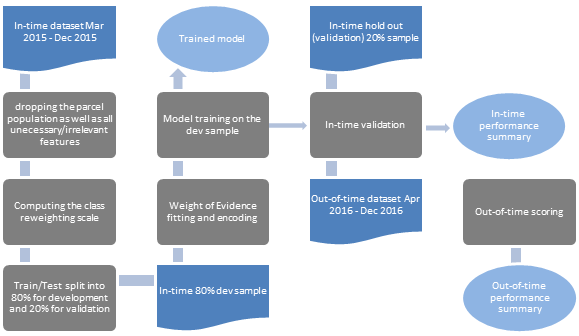}
\caption{Process Flow Diagram}
\end{figure}

\section{Summary of Key Assumptions and Limitations}

\subsection{Model Assumptions}

Following assumptions were made to prepare the data and make sure holds during model implementation:
\begin{itemize}
    \item Model was designed and developed using Auto loan origination data and hence should be used to rank order customers during loan origination only
    \item Model was developed using customer records who has a Fico 8 Auto above 660 and hence the implementation of model should focus on rank ordering customers within this Fico 8 Auto score range
    \item Usage of class reweighting means that the model does not output default probabilities. It can thus inly be used for rank ordering.
    \item Missing data holds no information beyond the fact that it is “missing” (see [6] for details about how the algorithm handles missingness). Any missing imputation done on original data might change the model output.
\end{itemize}

\subsection{Model Limitations}

\begin{itemize}
    \item Model is designed to only be applied to rank order customers at origination, not for account management
    \item Model can only be applied to customer with Fico 8 Auto range above 660
\end{itemize}


\begin{appendices}
\chapter{Proofs and Algorithms}

\section{Proof of the Error Estimation Bound}

\begin{proof}
  We remind the reader that Hoeffding’s inequality is as follows:
  
  Let $(X_i)_{1 \leq i \leq n}$ be n independent random variables such that for all $1 \leq i \leq n$:
  
  $$\mathbb{P}[a_i \leq X_i \leq b_i] = 1$$
  
  Where $a_i, \, b_i$ are real numbers such that $a_i \le b_i$.
  
  Hoeffding's inequality states that: 
  
  $$\mathbb{P}\left[\abs*{\sum_{i=1}^n X_i - \sum_{i=1}^n \mathbb{E}[X_i]} \ge t\right] \leq 2 \, \exp{\left ( \frac{-2t^2}{\sum_{i=1}^n (b_i - a_i)^2} \right )}$$
  
  Let’s consider the random variable $\mathbf{1}_{Y_i \neq h(X_i)}$. They are independent and bound between 0 and 1 by design. We can thus apply Hoeffding’s inequality:
  
  $$\mathbb{P}\left[\abs*{\sum_{i=1}^n \mathbf{1}_{Y_i \neq h(X_i)} - \sum_{i=1}^n \mathbb{E}[\mathbf{1}_{Y_i \neq h(X_i)}]} \ge t\right] \leq 2 \, \exp{\left ( \frac{-2t^2}{n} \right )}$$  
  
  Notice that $\mathbb{E}[\mathbf{1}_{Y_i \neq h(X_i)}] = L_{miss}(h(X_i))$. Therefore, by setting $t = \frac{n}{2}\sqrt{\frac{2}{n}\log\frac{2M}{\delta}}$ we have that:
  
  $$\mathbb{P}\left[\abs*{\sum_{i=1}^n \mathbf{1}_{Y_i \neq h(X_i)} - \sum_{i=1}^n \mathbb{E}[\mathbf{1}_{Y_i \neq h(X_i)}]} \ge \frac{n}{2}\sqrt{\frac{2}{n}\log\frac{2M}{\delta}}\right] \leq \delta$$
  
  Thus:
  
  $$1 - \mathbb{P}\left[2 \, \abs*{\hat L^n_{miss}(h) - L_{miss}(h)} \ge \sqrt{\frac{2}{n}\log\frac{2M}{\delta}}\right] \geq 1 - \delta$$
  
  The reader can easily prove that:
  
  $$L_{miss}(\hat h_H^n) - \min_{1 \leq j \leq M} L_{miss}(h_j) \leq 2 \sup_{h \in H} \abs*{\hat L^n_{miss}(h) - L_{miss}(h)}$$
  
  And finally by using the above inequality and switching to the complementary event’s probability:
  
  $$\mathbb{P} \left[L_{miss}\left ( h^n_H\right ) \leq \min_{1 \leq j \leq M} L_{miss} \left (h_j \right ) + \sqrt{\frac{2}{n}\log\left (\frac{2M}{\delta} \right )} \right] \geq 1 - \delta$$
  
  Q.E.D.

\end{proof}

\section{Forward Stagewise Additive Modelling}

\begin{algorithm}
\caption{Forward Stagewise Additive Modelling}\label{FSAM}
\begin{algorithmic}[1]
\Procedure{Initialize}{}
\State $f_0(x) \gets 0$
\BState \emph{For $m = 1$ to M}:
\State Compute $(\beta_m,\gamma_m) = \argmin_{\beta, \gamma} \sum_{i=1}^N L(y_i, f_{m-1}(x_i) + \beta \, b(x_i; \gamma))$.
\State $f_m(x) \gets f_{m-1}(x) + \beta_m \, b(x; \gamma_m)$
\EndProcedure
\end{algorithmic}
\end{algorithm}

\chapter{Class Imbalance}

\section{Why Imbalance can hurt your models}

Machine learning algorithms are built to minimize errors. Since the probability of instances belonging to the majority class is significantly high in imbalanced data set, the algorithms are much more likely to classify new observations to the majority class. For example, in a loan portfolio with an average default rate of 5\%, the algorithm has the incentive to classify new loan applications to non-default class since it would be correct 95\% of the time.

\begin{figure}[h!]
\centering
\includegraphics[width=8cm]{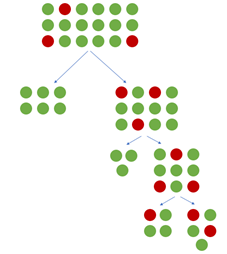}
\caption{How Imbalance Affects a Decision Tree}
\end{figure}

In the above-mentioned example, the cost of false negatives is significantly higher than that of a false positive, yet they are both penalized with a similar weight. Besides class reweighting that we presented above, several other approaches are possible.

\subsection{Data Resampling}

A possible alternative solution to class imbalance is to alter the dataset through oversampling (increasing the number of minority class members in the training set) or under-sampling (reduce the number of majority samples to balance the class distribution). The advantage of over-sampling is that no information from the original training set is lost, as all observations from the minority and majority classes are kept. On the other hand, Under-sampling might discard important information. But oversampling is prone to overfitting.
To solve this issue, Synthetic Minority Oversampling Technique (SMOTE) [9] aims to create new instances of the minority class by forming convex combinations of neighboring instances. This allows us to balance our data-set without as much overfitting, as we create new synthetic examples rather than using duplicates.

\begin{figure}[h!]
\centering
\includegraphics[width=10cm]{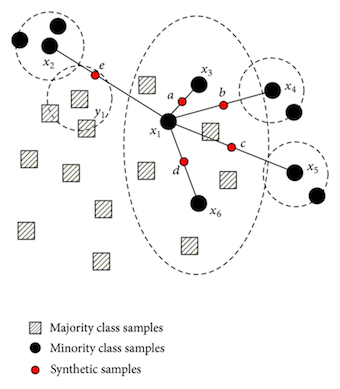}
\caption{SMOTE visualization}
\end{figure}

Advantages:
\begin{itemize}
    \item Allows generalization
    \item Adds new information
    \item No loss of information
\end{itemize}

Disadvantages:
\begin{itemize}
    \item Variance due to randomness
    \item Need to define the target percentage and the number of neighbors
    \item All examples are inside the convex hull
\end{itemize}

Below is a quick comparison between two logistic regressions: one fitted on imbalanced data and another one fitted on rebalanced data using SMOTE.

\begin{figure}[h!]
\centering
\includegraphics[width=16cm]{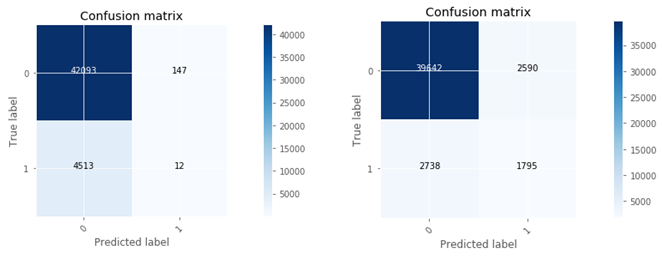}
\caption{Comparison of Logistic Regression on a dataset without and with SMOTE}
\end{figure}

\subsection{Appropriate Metrics}

In an imbalanced context, the only way to appropriately assess a model’s performance is to use AUC and PR at the same time. If we’re trying to compare two algorithms: one that tends to give a lot of false negatives and the other one doesn't. The number of FN only influences the FPR. Since we have class imbalance, it is likely that both algorithms will have a lot of True Negatives. Thus, it's pretty easy to achieve a small FPR even for the algorithm that gives a lot of False Negatives. In the below figure the model in black is clearly the best in terms of AUC but his low PR makes him sub-par compared to other models with slightly lower AUC.

\begin{figure}[h!]
\centering
\includegraphics[width=15cm]{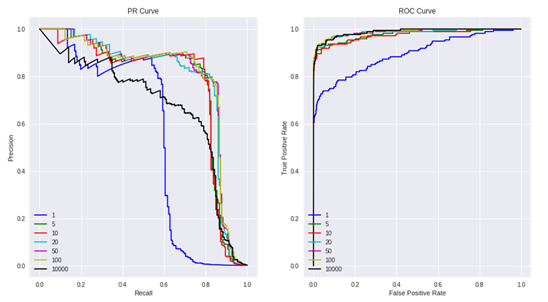}
\caption{Why PR is important along AUC}
\end{figure}

\end{appendices}

\newpage

\nocite{*}

\bibliographystyle{plain}
\bibliography{references}
\end{document}